\documentclass[11pt]{article}

\usepackage[T1]{fontenc}
\usepackage[utf8]{inputenc}
\usepackage{lmodern}
\usepackage{amsmath,amssymb,amsthm,mathtools}
\usepackage{booktabs}
\usepackage{enumitem}
\usepackage[margin=1in]{geometry}
\usepackage[protrusion=true,expansion=false]{microtype}
\usepackage[numbers,sort&compress]{natbib}
\usepackage[colorlinks=true,allcolors=blue]{hyperref}
\usepackage[nameinlink,noabbrev]{cleveref}

\newtheorem{theorem}{Theorem}[section]
\newtheorem{proposition}[theorem]{Proposition}
\newtheorem{lemma}[theorem]{Lemma}
\newtheorem{corollary}[theorem]{Corollary}
\theoremstyle{remark}
\newtheorem{remark}[theorem]{Remark}

\newcommand{\R}{\mathbb{R}}
\newcommand{\Sn}{\mathbb{S}}
\newcommand{\cS}{\mathcal{S}}

\newcommand{\Span}{\operatorname{span}}

\hypersetup{
  pdftitle={Rank-Three Projections and Minimal Multiplicity Bipartitions of Path Complements},
  pdfauthor={Jintao Fei; Jiangying Luo},
  pdfkeywords={inverse eigenvalue problem of a graph; multiplicity bipartition; tight frame; scalable frame; faithful orthogonal representation}
}

\title{Rank-Three Projections and Minimal Multiplicity Bipartitions of Path Complements}
\author{Jintao Fei\\
\small JD.com\\
\small \href{mailto:fei-jintao@outlook.com}{\texttt{fei-jintao@outlook.com}}
\and
Jiangying Luo\\
\small Tsinghua University\\
\small \href{mailto:luo-jiangying@outlook.com}{\texttt{luo-jiangying@outlook.com}}}
\date{}

\begin{document}

\maketitle

\begin{abstract}
For a graph $G$ admitting a real symmetric realization with exactly two
distinct eigenvalues, $MB(G)$ is the minimum, over all such realizations,
of the smaller of the two eigenvalue multiplicities.  Adm,
Fallat, Meagher, Nasserasr, Plosker, and Yang asked for this parameter for
the complement of a path on at least eight vertices.  We answer their
question completely by proving
\[
        MB(\overline{P_n})=3 \qquad (n\ge 6).
\]
In particular, this resolves the previously unresolved orders $n\ge9$
divisible by three.  The proof is exact and constructive.  We exhibit six
vectors in $\R^3$ whose mutual inner products vanish exactly for consecutive
indices, whose rank-one outer products form a basis of $\Sn^3$, and which
admit a strictly positive Parseval scaling.
An elementary absorption lemma then permits any finite faithful orthogonal
extension of this vector chain to be added with small positive weights while
the six original weights are corrected to retain the Parseval identity.
The resulting Gram matrix is a rank-three orthogonal projection in
$\cS(\overline{P_n})$.  A local two-dimensional orthogonality obstruction
gives the matching lower bound.  For completeness, we include self-contained
proofs of the exceptional small orders: $MB(\overline{P_3})=1$, whereas
$q(\overline{P_4})=4$ and $q(\overline{P_5})=3$.
\end{abstract}

\medskip
\noindent\textbf{Keywords.}
inverse eigenvalue problem of a graph; minimum number of distinct
eigenvalues; multiplicity bipartition; tight frame; scalable frame;
faithful orthogonal representation

\medskip
\noindent\textbf{2020 Mathematics Subject Classification.}
05C50, 15A18, 42C15.

\section{Introduction}

Let $G$ be a simple graph on $n$ vertices.  Denote by $\cS(G)$ the set of
real symmetric $n\times n$ matrices $A=[a_{ij}]$ such that, for $i\ne j$,
\[
 a_{ij}\ne0 \quad\Longleftrightarrow\quad \{i,j\}\in E(G),
\]
with no restriction on the diagonal.  The inverse eigenvalue problem of a
graph asks which spectra occur among matrices in $\cS(G)$.  One basic
parameter is
\[
 q(G)=\min\{q(A):A\in\cS(G)\},
\]
where $q(A)$ is the number of distinct eigenvalues of $A$; see, for example,
\cite{AhmadiEtAl2013,AdmEtAl2019,HogbenLinShader2022}.

Suppose $q(G)=2$.  If $G$ admits a matrix whose two eigenvalues have
multiplicities $n-k$ and $k$, with $1\le k\le \lfloor n/2\rfloor$, then
$[n-k,k]$ is an achievable multiplicity bipartition.  Following
\cite{AdmEtAl2019}, define
\[
 MB(G)=\min\bigl\{k:[n-k,k]\text{ is achievable by }G\bigr\}.
\]
The parameter has an equivalent frame-theoretic interpretation: $MB(G)$ is
the least rank, not exceeding $n/2$, of an orthogonal projection in
$\cS(G)$.  Thus its determination requires more than knowing merely that
$q(G)=2$ or knowing the minimum positive semidefinite rank.

Levene, Oblak, and \v{S}migoc proved that
$q(\overline{P_n})=2$ for $n\ge6$
\cite[Theorem~3.2]{LeveneOblakSmigoc2019}.  Adm et al.
subsequently showed
$MB(\overline{P_6})=MB(\overline{P_7})=3$ and posed the following question:
what is $MB(\overline{P_n})$ for $n\ge8$?
\cite[Question~5.1]{AdmEtAl2019}  The subsequent corrigendum
\cite{AdmEtAl2020Corrigendum} does not alter that question.  More recently,
in the range $n\ge6$ with $3\nmid n$, Barrett et al. recorded a rank-three
positive-semidefinite two-eigenvalue realization having the strong spectral property (SSP)
in the discussion following Porism~5.4
\cite[p.~160]{BarrettEtAl2026}.  Their discussion does not cover the residue
class $3\mid n$.  The main result of this paper gives a uniform,
self-contained answer for every $n\ge6$.

\begin{theorem}\label{thm:main}
For every integer $n\ge6$,
\[
             MB(\overline{P_n})=3.
\]
Equivalently, $\overline{P_n}$ has a realization that is a rank-three
orthogonal projection, and it has no realization by a projection of rank
one or two.
\end{theorem}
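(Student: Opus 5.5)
We prove the two inequalities separately, using the reformulation recorded above: $MB(G)$ is the least rank $k\le n/2$ of an orthogonal projection in $\cS(G)$. The projection also takes care of the diagonal, since $\cS(G)$ places no restriction there.

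A rank-$k$ projection $P\in\cS(\overline{P_n})$ is exactly the Gram matrix of vectors $v_1,\dots,v_n\in\R^k$ with two properties. First, $\langle v_i,v_j\rangle=0$ precisely when $|i-j|=1$, for $i\ne j$. Second, $\sum_i v_iv_i^{T}=I_k$, which is the Parseval condition. The upper bound therefore asks for such a Parseval frame in $\R^3$, and the lower bound asks us to rule one out in $\R^1$ and $\R^2$.

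\textbf{Lower bound.} First take $k=1$. Each $v_i$ is a scalar, and a zero inner product $v_iv_{i+1}=0$ forces $v_i=0$ or $v_{i+1}=0$. Say $v_i=0$. Then $v_i$ is orthogonal to everything, which contradicts the fact that vertex $i$ has a non-neighbour at distance at least $2$ along the path. Such a non-neighbour exists because $n\ge6$, and its inner product with $v_i$ must be nonzero.

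Now take $k=2$. No $v_i$ can be zero, by the same reasoning: for $n\ge 6$ every vertex of $\overline{P_n}$ has a neighbour. Consider $v_1,v_2,v_3$. The conditions $v_2\perp v_1$ and $v_2\perp v_3$ hold in $\R^2$ with $v_2\ne0$, so $v_1$ and $v_3$ both lie on the line $v_2^{\perp}$, and hence $v_3=cv_1$. Next, $v_4\perp v_3$ gives $v_4\perp v_1$. But vertices $1$ and $4$ are adjacent in $\overline{P_n}$, so $\langle v_1,v_4\rangle$ must be nonzero. This contradiction is the local two-dimensional orthogonality obstruction, so $MB\ge3$.

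\textbf{Upper bound.} The plan is constructive.
\begin{enumerate}[label=(\roman*)]
\item Exhibit explicit $u_1,\dots,u_6\in\R^3$ with $\langle u_i,u_j\rangle=0$ if and only if $|i-j|=1$. The outer products $u_iu_i^T$ should span the $6$-dimensional space $\Sn^3$, and there should be weights $c_i>0$ with $\sum_i c_iu_iu_i^T=I_3$.
\item Extend the chain to $u_1,\dots,u_n$ with the same orthogonality pattern. This is a faithful orthogonal representation of $\overline{P_n}$, built by choosing each new vector orthogonal to its predecessor and generic otherwise. At each step the new vector ranges over a plane, and genericity avoids finitely many forbidden zero inner products. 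The new vector must also not be parallel to a direction that would create a spurious zero.
\item Absorption lemma. Give the new vectors weights $\varepsilon>0$. The residual $E=\varepsilon\sum_{i>6}u_iu_i^T$ is small, and because the six outer products form a basis of $\Sn^3$ we can solve $\sum_{i\le6}c_i'u_iu_i^T=I-E$ with $c'$ depending linearly on $E$. For $\varepsilon$ small, $c'$ stays close to $c$ and therefore positive.
\item Set $v_i=\sqrt{c_i'}\,u_i$. Rescaling by positive factors preserves the zero pattern, so the Gram matrix of the $v_i$ is a rank-three projection in $\cS(\overline{P_n})$.
\end{enumerate}

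The main obstacle is step (i): finding exact vectors meeting all three requirements at once. I would parametrize $u_1=e_1$ and $u_2=e_2$, take $u_3$ in $e_2^{\perp}$, and continue with rational choices. I would then check the $6\times6$ determinant of the outer-product coordinates, and solve the linear system for $c$ directly, adjusting the parameters until every $c_i$ comes out positive.
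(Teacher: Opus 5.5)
Your lower bound is the paper's argument: nonzero rows, then $v_1\parallel v_3$ forcing $v_1\perp v_4$. Your steps (ii)--(iv) also match the paper's finite-avoidance extension, anchor absorption lemma and Parseval rescaling.

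\textbf{The gap is step (i).} This is the one existence claim the upper bound rests on, and you give only a search plan. Because you require the six outer products to form a basis of $\Sn^3$, the coefficients of $I_3$ in that basis are uniquely determined by the directions of the chain. Positivity of all six is therefore a real condition that a given faithful chain can fail. Nothing in your proposal shows that some faithful $6$-chain in $\R^3$ satisfies it. ``Adjusting the parameters until every $c_i$ comes out positive'' is a procedure, not a proof that a positive solution exists. Without such a chain, step (iii) has nothing to perturb, so $MB(\overline{P_n})\le 3$ is not yet established.

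The paper closes this with an explicit integral certificate:
$d_1=(2,-1,0)$, $d_2=(0,0,1)$, $d_3=(1,1,0)$, $d_4=(1,-1,-2)$, $d_5=(0,2,-1)$, $d_6=(2,1,2)$, with weights $(\tfrac18,\tfrac38,\tfrac14,\tfrac1{12},\tfrac18,\tfrac1{24})$. It verifies three things exactly:
\begin{enumerate}[label=(\alph*)]
\item the Gram matrix has off-diagonal zeros exactly at consecutive pairs;
\item the weighted outer products sum to $I_3$;
\item the $6\times 6$ coordinate matrix of the outer products in $\Sn^3$ has determinant $288$.
\end{enumerate}
Since $d_1\perp d_2$, this example fits your normalization $u_1\parallel e_1$, $u_2\parallel e_2$ after a rotation, so your search would succeed. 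The proof, however, needs the outcome, not the plan.

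A smaller point concerns step (ii). Finiteness of the forbidden set needs each condition $\langle u_{k+1},u_j\rangle=0$, for $j\le k-1$, to cut out a line in the plane $u_k^{\perp}$ rather than the whole plane. That requires $u_j\not\parallel u_k$. The delicate case is $j=k-2$, where $u_{k-2}$ and $u_k$ both lie in $u_{k-1}^{\perp}$; the paper handles it via $d_{k-2}^{\mathsf T}(d_{k-1}\times d_k)\ne0$.

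For a faithful chain of length at least four, this nonparallelism is automatic. If $u_j\parallel u_k$, then some path-neighbour of one of them would be orthogonal to the other, giving a spurious zero. So your step survives, but you should state this explicitly.
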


In particular, taking $n=3m$ yields
\[
             MB(\overline{P_{3m}})=3 \qquad (m\ge3).
\]

Our construction combines two ideas.  The first is a faithful orthogonal
representation in $\R^3$: consecutive path indices are represented by
orthogonal vectors and every other pair has nonzero inner product.  Finite
avoidance arguments of this kind also appear in
\cite[Lemma~5.2]{BarrettEtAl2026}.  Faithfulness alone, however, yields only
a rank-three positive semidefinite matrix; it does not force its three
positive eigenvalues to coincide.  The second ingredient addresses exactly
this gap.  Six explicit initial directions have rank-one outer products
spanning all of $\Sn^3$, while a strictly positive combination of those
outer products is $I_3$.  Any finite tail can therefore be added with a
small positive weight and absorbed by a small correction to the six anchor
weights.  The corrected vectors form a Parseval frame, so their Gram matrix
is an orthogonal projection.  This anchored absorption step is what permits
the faithful representation to remain on the same vertex set and have
exactly two eigenvalues.

The argument uses no floating-point calculation.  The seed vectors are
integral, all displayed identities are rational, and every extension
parameter may be chosen to be an integer.

\section{Projections, frames, and an absorption lemma}

We first record the standard projection formulation; compare
\cite[Lemma~2.2]{AdmEtAl2019} and the tight-frame formulation in
\cite{AbdollahiNajafi2018,ChenEtAl2014,FurstGrotts2021}.

\begin{proposition}[Projection--frame equivalence]\label{prop:projection}
Let $G$ be a graph on $n$ vertices and let
$1\le r\le\lfloor n/2\rfloor$.  The bipartition $[n-r,r]$ is achievable by
$G$ if and only if there is an $n\times r$ real matrix $U$ such that
\[
 U^{\mathsf T}U=I_r
 \quad\text{and}\quad
 UU^{\mathsf T}\in\cS(G).
\]
In this case $UU^{\mathsf T}$ is a rank-$r$ orthogonal projection with
spectrum $\{0^{(n-r)},1^{(r)}\}$.
\end{proposition}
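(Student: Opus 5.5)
The plan is to prove the two directions separately, using only the spectral theorem and the fact that $\cS(G)$ constrains only off-diagonal entries, so it is invariant under adding multiples of $I_n$ and under nonzero rescaling.

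\emph{Necessity.} Suppose $A\in\cS(G)$ has exactly two distinct eigenvalues $\lambda>\mu$ with multiplicities $n-r$ and $r$ in some order. The affine map $t\mapsto (t-\mu)/(\lambda-\mu)$ or $t\mapsto(\lambda-t)/(\lambda-\mu)$, whichever sends the eigenvalue of multiplicity $r$ to $1$ and the other to $0$, gives $B=\alpha A+\beta I_n$ with $\alpha\neq0$. For $i\ne j$ we have $b_{ij}=\alpha a_{ij}$, so $b_{ij}\neq0$ exactly when $a_{ij}\ne0$, and hence $B\in\cS(G)$. Now $B$ is symmetric with spectrum $\{0^{(n-r)},1^{(r)}\}$. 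By the spectral theorem, choose an orthonormal basis $u_1,\dots,u_r$ of the $1$-eigenspace and let $U=[u_1\ \cdots\ u_r]$. Then $U^{\mathsf T}U=I_r$ and $B=UU^{\mathsf T}$, because both sides act as the identity on the column space of $U$ and as zero on its orthogonal complement.

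\emph{Sufficiency.} Given $U$ with $U^{\mathsf T}U=I_r$, put $P=UU^{\mathsf T}$. It is symmetric, and $P^2=U(U^{\mathsf T}U)U^{\mathsf T}=P$, so it is an orthogonal projection with eigenvalues in $\{0,1\}$. Its rank equals $\operatorname{rank}U=r$, since $U$ has orthonormal columns; equivalently, $\operatorname{tr}P=\operatorname{tr}(U^{\mathsf T}U)=r$. The spectrum is therefore $\{0^{(n-r)},1^{(r)}\}$. Because $1\le r\le\lfloor n/2\rfloor<n$, both eigenvalues actually occur, so $P\in\cS(G)$ realizes the bipartition $[n-r,r]$. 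The same computation proves the final sentence of the proposition.

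The only real subtlety is the normalization in the necessity direction. We must map the eigenvalue of multiplicity $r$ to $1$, whichever of $\lambda,\mu$ it is, and we must check that the affine change preserves membership in $\cS(G)$. That holds because the scalar $\alpha$ is nonzero and the shift $\beta I_n$ touches only the diagonal. When $r=n/2$ either choice works. Everything else is routine.
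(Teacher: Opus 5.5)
Your proof is correct and takes the same approach as the paper. You affinely normalize $A$ so that the multiplicity-$r$ eigenvalue goes to $1$, note that the off-diagonal pattern is preserved, and factor the resulting projection through an orthonormal basis of its range, with the converse being the spectral computation for $UU^{\mathsf T}$; you are slightly more careful than the paper in checking that both eigenvalues actually occur because $r<n$.
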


\begin{proof}
If $A\in\cS(G)$ has two distinct eigenvalues $\lambda$ and $\mu$, with
multiplicities $n-r$ and $r$, respectively, then
\[
        P=\frac{A-\lambda I}{\mu-\lambda}
\]
is an orthogonal projection of rank $r$ and has the same off-diagonal zero
pattern as $A$.  Write $P=UU^{\mathsf T}$ using an orthonormal basis for its
range.  The converse follows immediately from the spectrum of
$UU^{\mathsf T}$ when $U^{\mathsf T}U=I_r$.
\end{proof}

Write $\Sn^d$ for the $d(d+1)/2$-dimensional real vector space of symmetric
$d\times d$ matrices.  The following elementary lemma isolates the weight
correction used later.  It is a strict-scalability statement in the language
of finite frames; see
\cite{KutyniokEtAl2013,CahillChen2013,AyyanarJohnsonThilak2025} for general
background on scalable frames and their associated graphs.

\begin{lemma}[Anchor absorption]\label{lem:absorption}
Let $x_1,\ldots,x_r\in\R^d$ be nonzero vectors such that
\[
 \Span\{x_i x_i^{\mathsf T}:1\le i\le r\}=\Sn^d
 \tag{2.1}\label{eq:span}
\]
and suppose that, for some $a_1,\ldots,a_r>0$,
\[
       \sum_{i=1}^r a_i x_i x_i^{\mathsf T}=I_d.
 \tag{2.2}\label{eq:anchorparseval}
\]
Then for every finite collection $y_1,\ldots,y_s\in\R^d$ there are strictly
positive weights $\alpha_1,\ldots,\alpha_r$ and
$\beta_1,\ldots,\beta_s$ satisfying
\[
 \sum_{i=1}^r \alpha_i x_i x_i^{\mathsf T}
 +\sum_{j=1}^s \beta_j y_j y_j^{\mathsf T}=I_d.
 \tag{2.3}\label{eq:absorbed}
\]
\end{lemma}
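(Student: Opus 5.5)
The plan is a perturbation argument that uses the spanning hypothesis \eqref{eq:span} to absorb a small tail. Define the linear map
\[
 L:\R^r\to\Sn^d,\qquad L(c)=\sum_{i=1}^r c_i x_i x_i^{\mathsf T}.
\]
By \eqref{eq:span}, $L$ is surjective. Fix a linear right inverse $R:\Sn^d\to\R^r$ with $L\circ R=\mathrm{id}$. For instance, choose $d(d+1)/2$ indices whose outer products form a basis of $\Sn^d$, take coordinates in that basis, and put zero in the remaining entries. Since $R$ is linear on a finite-dimensional space, it is bounded: $\|R(S)\|_\infty\le C\|S\|$ for some constant $C$ and any fixed norm on $\Sn^d$.

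Next, fix the tail. Let $t>0$ be a parameter to be chosen, and set $\beta_j=t$ for every $j$; these are strictly positive. Put
\[
 T=\sum_{j=1}^s y_j y_j^{\mathsf T}.
\]
Some $y_j$ may be zero, but this causes no harm. We want $\alpha$ with $L(\alpha)=I_d-tT$. Take
\[
 \alpha=a-tR(T),
\]
where $a=(a_1,\dots,a_r)$. Then \eqref{eq:anchorparseval} gives $L(a)=I_d$, so by linearity
\[
 L(\alpha)=I_d-t\,L(R(T))=I_d-tT,
\]
and this is exactly \eqref{eq:absorbed}.

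Finally, positivity. Each $\alpha_i\ge a_i-tC\|T\|$. Let $a_{\min}=\min_i a_i>0$. If $T=0$, any $t>0$ works. Otherwise choose
\[
 0<t<\frac{a_{\min}}{C\|T\|},
\]
which makes every $\alpha_i>0$.

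I do not expect a real obstacle here. The only point that needs care is that a bounded right inverse exists, and surjectivity of $L$ together with finite dimensionality handles this. In the paper's application one may also want integer or rational data. If so, I would choose $t$ rational and take $R$ through a rational basis selection, which keeps all the $\alpha_i$ rational.
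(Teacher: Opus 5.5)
Your proof is correct and is essentially the paper's argument: the paper writes $-T=\sum_i c_i x_i x_i^{\mathsf T}$ (your $c=-R(T)$), sets $\alpha_i=a_i+\varepsilon c_i$ and $\beta_j=\varepsilon$, and takes $\varepsilon>0$ small. The bounded right inverse $R$ is more machinery than needed, since for a fixed tail only the single vector $R(T)$ enters and $t<a_{\min}/\max_i|R(T)_i|$ already suffices.
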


\begin{proof}
Set $T=\sum_{j=1}^s y_jy_j^{\mathsf T}$.  By \eqref{eq:span}, choose real
numbers $c_1,\ldots,c_r$ such that
\[
              -T=\sum_{i=1}^r c_i x_i x_i^{\mathsf T}.
\]
For $\varepsilon>0$, put
\[
 \alpha_i=a_i+\varepsilon c_i \quad(1\le i\le r),
 \qquad
 \beta_j=\varepsilon \quad(1\le j\le s).
\]
Since every $a_i$ is strictly positive, all $\alpha_i$ remain positive for
sufficiently small $\varepsilon>0$; the $\beta_j$ are positive as well.
Combining the two displayed identities gives \eqref{eq:absorbed}.
\end{proof}

\begin{remark}\label{rem:absorption}
Spanning, rather than linear independence, is the essential hypothesis in
\Cref{lem:absorption}.  If the $r=d(d+1)/2$ anchor outer products form a
basis, the correction coefficients are unique.  Strict positivity in
\eqref{eq:anchorparseval} supplies the open-neighborhood stability needed to
retain every anchor.  The lemma concerns each fixed finite tail; it neither
asserts a lower bound on the admissible weights uniform in $s$ nor implies
the strong spectral property.
\end{remark}

\section{Faithful path chains in three dimensions}

Call nonzero vectors $d_1,\ldots,d_k\in\R^3$ a \emph{faithful path chain}
if
\[
 d_i^{\mathsf T}d_j=0
 \quad\Longleftrightarrow\quad
 |i-j|=1
 \qquad(i\ne j).
 \tag{3.1}\label{eq:pathchain}
\]
Thus their Gram matrix has the off-diagonal zero pattern of $P_k$, or,
equivalently, its nonzero pattern represents $\overline{P_k}$.

\begin{lemma}[Finite-avoidance extension]\label{lem:extension}
Suppose $k\ge3$ and $d_1,\ldots,d_k\in\R^3$ form a faithful path chain, with
no two vectors parallel.  Then there is a vector $d_{k+1}$ such that
$d_1,\ldots,d_{k+1}$ retain both properties.  If the original vectors have
integer coordinates, $d_{k+1}$ can also be chosen with integer coordinates.
\end{lemma}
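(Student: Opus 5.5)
The new vector must lie on the line orthogonal to $d_k$, so I would take $d_{k+1}$ in the plane $d_k^{\perp}$ and then avoid finitely many bad directions there. Let $W=d_k^{\perp}$, a two-dimensional subspace of $\R^3$. The required conditions on $d_{k+1}$ are:
\begin{itemize}
\item $d_{k+1}^{\mathsf T}d_k=0$, which holds automatically for $d_{k+1}\in W$;
\item $d_{k+1}^{\mathsf T}d_i\ne0$ for $1\le i\le k-1$;
\item $d_{k+1}$ is nonzero and not parallel to any $d_i$.
\end{itemize}

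For each $i\le k-1$, the vector $d_i$ is not parallel to $d_k$. Hence the linear functional $w\mapsto d_i^{\mathsf T}w$ is not identically zero on $W$, since otherwise $d_i\in W^{\perp}=\Span\{d_k\}$. Its kernel in $W$ is therefore a single line $L_i$.

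For non-parallelism, $d_{k+1}\in W$ can be parallel to $d_i$ only if $d_i\in W$. In that case the line $\Span\{d_i\}$ is another excluded line in $W$. Note that $d_{k-1}\in W$, and its line is already contained in $L_j$ for no obvious reason, so I simply exclude it separately. Also $d_{k+1}$ is never parallel to $d_k$, because $d_k\notin W$.

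We must therefore avoid finitely many lines through the origin in the plane $W$, together with the origin itself. Since a plane is not a finite union of lines, a suitable $d_{k+1}$ exists. This is the real case.

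For the integral case, I would use that $d_k$ is integral. Then $W$ has an integer basis $u,v$, for example two independent cross products $d_k\times e_j$. Consider the candidates $w_t=u+tv$ for $t\in\mathbb{Z}$. These vectors are pairwise non-parallel, so each excluded line contains at most one of them. Since there are at most $2(k-1)$ excluded lines, some integer $t$ with $0\le t\le 2k$ works, and the resulting $d_{k+1}$ is integral.

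I expect no real obstacle here. The only point needing care is confirming that each constraint excludes just a line and not all of $W$. This is exactly where the non-parallel hypothesis is used: for $i\le k-1$, $d_i$ is not a multiple of $d_k$.

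The hypothesis $k\ge3$ is not needed for this argument. It presumably reflects the context in which the lemma is applied.
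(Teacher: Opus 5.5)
Your argument is correct and is essentially the paper's finite-avoidance argument in the plane $d_k^{\perp}$. The paper runs along the specific integral line $d_{k-1}+t\,(d_{k-1}\times d_k)$ and checks each constraint case by case: it uses the chain property for $j\le k-3$ and a triple-product argument for $j=k-2$. You instead obtain nondegeneracy of every constraint uniformly from $d_i\notin\Span\{d_k\}$, which is cleaner and confirms that $k\ge3$ is not needed.

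One small correction to your aside: for $k\ge3$ the line $\Span\{d_{k-1}\}$ is exactly $L_{k-2}$, since $d_{k-1}\in W$ and $d_{k-2}^{\mathsf T}d_{k-1}=0$. Excluding it separately is harmless, though.
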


\begin{proof}
Put
\[
 b=d_{k-1},\qquad c=d_{k-1}\times d_k,
 \qquad v(t)=b+tc.
\]
Since $b$ and $d_k$ are nonzero and orthogonal, $c\ne0$; moreover $b\perp c$.
Consequently $v(t)\ne0$ for every real $t$, and
\[
 v(t)^{\mathsf T}d_k=0,
 \qquad
 v(t)^{\mathsf T}d_{k-1}=\lVert d_{k-1}\rVert^2\ne0.
\tag{3.2}\label{eq:newlast}
\]

For $j\le k-3$, the affine function
\[
 d_j^{\mathsf T}v(t)=d_j^{\mathsf T}d_{k-1}
       +t\,d_j^{\mathsf T}c
\]
has a nonzero constant term by \eqref{eq:pathchain}, so it has at most one
zero.  For $j=k-2$, its constant term is zero, but
\[
 d_{k-2}^{\mathsf T}c
 =d_{k-2}^{\mathsf T}(d_{k-1}\times d_k)\ne0.
\tag{3.3}\label{eq:triple}
\]
Indeed, equality in \eqref{eq:triple} would place $d_{k-2}$ in
$\Span(d_{k-1},d_k)$.  Both $d_{k-2}$ and $d_k$ are orthogonal to
$d_{k-1}$, so this would force $d_{k-2}\in\Span(d_k)$, contrary to the
nonparallel hypothesis.  Hence only finitely many values of $t$ create an
unwanted zero inner product.

It remains to avoid parallelism.  For each old vector $d_j$, at most one
value of $t$ can satisfy $v(t)\parallel d_j$.  Otherwise two distinct such
values, upon subtraction, would imply $c\parallel d_j$, and substitution
would then imply $b\parallel d_j$, impossible because $b$ and $c$ are
nonzero and orthogonal.  There are therefore only finitely many forbidden
values of $t$.  Choose any real $t$ outside this set and set
$d_{k+1}=v(t)$.  If the existing vectors are integral, the forbidden set is
still finite and $t$ may be chosen to be an integer.
\end{proof}

\begin{corollary}\label{cor:extend}
Every integral faithful path chain $d_1,\ldots,d_k$ in $\R^3$ whose vectors
are pairwise nonparallel extends, for every $N\ge k$, to an integral
faithful path chain $d_1,\ldots,d_N$ with the same property.
\end{corollary}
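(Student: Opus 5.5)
The corollary follows from \Cref{lem:extension} by induction on $N$. For $N=k$ there is nothing to prove. Given an integral faithful path chain $d_1,\ldots,d_N$ in $\R^3$ with pairwise nonparallel vectors and $N\ge k$, I will apply \Cref{lem:extension} to obtain an integral vector $d_{N+1}$. The resulting chain $d_1,\ldots,d_{N+1}$ is still faithful in the sense of \eqref{eq:pathchain}, still pairwise nonparallel, and still integral. Iterating $N-k$ times from the original chain produces $d_1,\ldots,d_N$, and the first $k$ vectors are never altered.

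The only point needing care is the hypothesis $k\ge3$ in \Cref{lem:extension}. The lemma uses $d_{k-2}$ to obtain the nonvanishing triple product \eqref{eq:triple}. The corollary as stated carries no lower bound on $k$, so I will handle this in one of two ways.

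\begin{itemize}
\item If $k\ge3$, every intermediate chain has length at least $3$, so the lemma applies at each step.
\item If $k\le2$, I will first extend by hand to length $3$. In the proof of \Cref{lem:extension}, the case $j=k-2$ is vacuous when the chain has length $2$. The same choice $v(t)=d_1+t(d_1\times d_2)$ then needs to avoid only parallelism with $d_1$ and $d_2$, which excludes finitely many $t$. For $k=1$, I take any integral vector orthogonal to $d_1$; it is automatically nonparallel to $d_1$.
\end{itemize}

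Since the main application starts from six integral seed vectors, I expect to note simply that $k\ge3$ holds there. The corollary is then a direct induction with no real obstacle. The content, namely the finiteness of the forbidden parameter set and the integrality of $t$, is already contained in \Cref{lem:extension}.
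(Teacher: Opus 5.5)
Your proof is correct and matches the paper's implicit argument: the corollary follows by iterating \Cref{lem:extension}, which preserves integrality, faithfulness, and pairwise nonparallelism at each step. Your separate treatment of $k\le2$ is a worthwhile addition, since the corollary as stated omits the lemma's hypothesis $k\ge3$; your handling there is right (the $j=k-2$ case is vacuous at length $2$, and $t\ne0$ avoids all parallelism), while the paper only needs $k=6$.
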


\section{The six-vector anchor and the main theorem}

The central exact configuration is the following.

\begin{proposition}[A six-vector absorbing anchor]\label{prop:anchor}
Let
\[
\begin{aligned}
 d_1&=(2,-1,0),      & d_2&=(0,0,1),       & d_3&=(1,1,0),\\
 d_4&=(1,-1,-2),     & d_5&=(0,2,-1),      & d_6&=(2,1,2),
\end{aligned}
\tag{4.1}\label{eq:seedvectors}
\]
and
\[
 (a_1,\ldots,a_6)
 =\left(\frac18,\frac38,\frac14,\frac1{12},\frac18,\frac1{24}\right).
\tag{4.2}\label{eq:seedweights}
\]
Then:
\begin{enumerate}[label=\textup{(\roman*)}]
\item $d_1,\ldots,d_6$ form a faithful path chain and are pairwise
      nonparallel;
\item $\sum_{i=1}^6 a_i d_i d_i^{\mathsf T}=I_3$;
\item the six matrices $d_i d_i^{\mathsf T}$ form a basis of $\Sn^3$.
\end{enumerate}
\end{proposition}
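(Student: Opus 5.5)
The proof is a direct finite verification of the three claims. I would organize it so that each claim reduces to a short table of exact rational computations.

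For (i), I will list all fifteen inner products $d_i^{\mathsf T}d_j$ with $i<j$. The five consecutive ones should vanish:
\[
d_1^{\mathsf T}d_2 = 0,\quad d_2^{\mathsf T}d_3 = 0,\quad d_3^{\mathsf T}d_4 = 1-1 = 0,\quad d_4^{\mathsf T}d_5 = -2+2 = 0,\quad d_5^{\mathsf T}d_6 = 2-2 = 0.
\]
The remaining ten should be nonzero. For example, $d_1^{\mathsf T}d_3=1$, $d_1^{\mathsf T}d_4=3$, $d_1^{\mathsf T}d_5=-2$, $d_1^{\mathsf T}d_6=3$, $d_2^{\mathsf T}d_4=-2$, $d_2^{\mathsf T}d_5=-1$, $d_2^{\mathsf T}d_6=2$, $d_3^{\mathsf T}d_5=2$, $d_3^{\mathsf T}d_6=3$ and $d_4^{\mathsf T}d_6=-3$. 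Pairwise nonparallelism then needs no further work. Consecutive vectors are orthogonal and nonzero, so they are not parallel. Nonconsecutive parallel vectors can be excluded by inspecting their zero patterns and coordinate ratios, which is a quick check over fifteen pairs.

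For (ii), I will identify each $d_id_i^{\mathsf T}$ with its six independent entries $(x^2,y^2,z^2,xy,xz,yz)$ and sum the weighted coordinates. The diagonal entries give
\[
\tfrac{4}{8}+\tfrac{2}{8}+\tfrac{1}{12}+\tfrac{4}{24} = 1,\qquad \tfrac{1}{8}+\tfrac{1}{4}+\tfrac{1}{12}+\tfrac{4}{8}+\tfrac{1}{24} = 1,\qquad \tfrac{3}{8}+\tfrac{4}{12}+\tfrac{1}{8}+\tfrac{4}{24} = 1.
\]
The off-diagonal entries give
\[
xy:\ -\tfrac{2}{8}+\tfrac14-\tfrac1{12}+\tfrac{2}{24} = 0,\qquad xz:\ -\tfrac{2}{12}+\tfrac{4}{24} = 0,\qquad yz:\ \tfrac{2}{12}-\tfrac{2}{8}+\tfrac{2}{24} = 0.
\]
So the weighted sum is $I_3$.

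For (iii), since $\dim\Sn^3=6$, it suffices to show that the $6\times6$ matrix $M$ has nonzero determinant. The rows of $M$ are the vectors $(x^2,y^2,z^2,xy,xz,yz)$ for $d_1,\ldots,d_6$. This step is the main obstacle, because a raw $6\times6$ determinant is error-prone. I would simplify it in two ways before expanding.

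First, the row for $d_2$ is $e_3=(0,0,1,0,0,0)$, so expanding along it removes the $z^2$ column. Second, the row for $d_3$ is $(1,1,0,1,0,0)$ and the row for $d_1$ is $(4,1,0,-2,0,0)$. Both are supported only on the $x^2,y^2,xy$ columns, so I would eliminate using them. The remaining rows are those of $d_4$, $d_5$ and $d_6$:
\[
(1,1,4,-1,-2,2),\qquad (0,4,1,0,0,-2),\qquad (4,1,4,2,4,2).
\]
These contribute the $xz$ and $yz$ columns.

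The determinant then factors as a block-triangular product: a $3\times3$ minor on $(x^2,y^2,xy)$ coming from $d_1,d_3$ together with one other row, times a $2\times2$ minor on $(xz,yz)$. I would compute this exactly. As a fallback, I would prove independence directly. Suppose $\sum_i c_i d_id_i^{\mathsf T}=0$ and solve the six linear equations, starting from the $xz$ equation, which involves only $c_4$ and $c_6$. Showing that all $c_i$ must vanish gives (iii).
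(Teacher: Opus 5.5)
Your proposal is correct and takes the paper's route: exact verification of the Gram entries, of the weighted sum of outer products, and of the nonvanishing of the $6\times6$ determinant of the coordinate vectors $(x^2,y^2,z^2,xy,xz,yz)$, which the paper reports as $288$; the paper checks nonparallelism via $\lVert d_i\times d_j\rVert^2>0$ rather than by inspection. One correction to your primary route for (iii): none of the rows for $d_4,d_5,d_6$ vanishes on the $(xz,yz)$ columns, and eliminating with the $d_1,d_3$ rows does not change that, so the block-triangular form only appears after you replace the $d_6$ row by $2\nu(d_4)+3\nu(d_5)+\nu(d_6)=(6,15,15,0,0,0)$ (with $\nu$ the coordinate map above); the resulting minors are $-72$ and $4$, giving $288$ in absolute value, and your fallback kernel computation works as stated.
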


\begin{proof}
The exact Gram matrix is
\[
[d_i^{\mathsf T}d_j]_{i,j=1}^6=
\begin{pmatrix}
5&0&1&3&-2&3\\
0&1&0&-2&-1&2\\
1&0&2&0&2&3\\
3&-2&0&6&0&-3\\
-2&-1&2&0&5&0\\
3&2&3&-3&0&9
\end{pmatrix}.
\tag{4.3}\label{eq:gramseed}
\]
Its off-diagonal zeros occur exactly at consecutive pairs.  To verify the
nonparallel condition without approximation, the values
$\lVert d_i\times d_j\rVert^2$ for $1\le i<j\le6$, in lexicographic order,
are
\[
 5,9,21,21,36,\;2,2,4,5,\;12,6,9,\;30,45,\;45,
\]
all positive.  Direct multiplication using \eqref{eq:seedweights} gives
\[
             \sum_{i=1}^6 a_i d_i d_i^{\mathsf T}=I_3.
\tag{4.4}\label{eq:seedidentity}
\]

Finally define
\[
 \nu(x,y,z)=(x^2,y^2,z^2,xy,xz,yz)^{\mathsf T}.
\]
These are the coordinates of $(x,y,z)(x,y,z)^{\mathsf T}$ in the basis
\[
 (E_{11},E_{22},E_{33},E_{12}+E_{21},E_{13}+E_{31},E_{23}+E_{32})
\]
of $\Sn^3$.  Exact elimination gives
\[
 \det[\nu(d_1)\ \nu(d_2)\ \cdots\ \nu(d_6)]=288\ne0.
\tag{4.5}\label{eq:det288}
\]
Hence the six outer products form a basis.
\end{proof}

We can now prove the principal result.

\begin{proof}[Proof of \Cref{thm:main}]
Fix $n\ge6$.  By \Cref{cor:extend}, the six vectors in
\eqref{eq:seedvectors} extend to integral vectors
$d_1,\ldots,d_n$ satisfying
\[
 d_i^{\mathsf T}d_j=0
 \quad\Longleftrightarrow\quad |i-j|=1
 \qquad(i\ne j).
\tag{4.6}\label{eq:fullchain}
\]
If $n=6$ no extension is needed.  Apply \Cref{lem:absorption} with the first
six vectors as anchors and $d_7,\ldots,d_n$ as the finite tail.  By
\Cref{prop:anchor}, there exist weights $w_1,\ldots,w_n>0$ such that
\[
                 \sum_{i=1}^n w_i d_i d_i^{\mathsf T}=I_3.
\tag{4.7}\label{eq:fullparseval}
\]
Let $U$ be the $n\times3$ matrix whose $i$th row is
$u_i^{\mathsf T}=\sqrt{w_i}\,d_i^{\mathsf T}$.  Then
$U^{\mathsf T}U=I_3$, and hence
\[
                  P=UU^{\mathsf T}
\]
is a rank-three orthogonal projection.  For $i\ne j$,
\[
 P_{ij}=\sqrt{w_iw_j}\,d_i^{\mathsf T}d_j.
\]
All weights are positive, so \eqref{eq:fullchain} says precisely that
$P_{ij}=0$ if and only if $|i-j|=1$.  The consecutive pairs are exactly the
nonedges of $\overline{P_n}$, and therefore
$P\in\cS(\overline{P_n})$.  By \Cref{prop:projection},
\[
                  MB(\overline{P_n})\le3.
\]
This construction also proves directly that $q(\overline{P_n})=2$: the
projection has two eigenvalues, while $\overline{P_n}$ has an edge and hence
cannot be represented by a scalar matrix.

For the reverse inequality, suppose that
$P=UU^{\mathsf T}\in\cS(\overline{P_n})$ is a projection of rank
$r\in\{1,2\}$, and let $u_i^{\mathsf T}$ be row $i$ of $U$.  Every $u_i$ is
nonzero, since a zero row would make vertex $i$ isolated in the graph of
$P$, whereas $\overline{P_n}$ has no isolated vertex for $n\ge4$.  Moreover,
\[
                 u_i\perp u_{i+1}\qquad(1\le i<n).
\tag{4.8}\label{eq:roworth}
\]
Rank one is impossible because two nonzero vectors in $\R$ cannot be
orthogonal.  If $r=2$, then $u_1\perp u_2$ and $u_3\perp u_2$ force
$u_1\parallel u_3$.  Since $u_3\perp u_4$, it follows that
$u_1\perp u_4$.  But $\{1,4\}$ is an edge of $\overline{P_n}$, so
$P_{14}$ must be nonzero, a contradiction.  Thus no rank-one or rank-two
projection has the required zero pattern, and
$MB(\overline{P_n})\ge3$.
\end{proof}

\begin{remark}[Exactness and effectivity]\label{rem:exact}
At every extension step, \Cref{lem:extension} permits an integer value of
$t$, so all $d_i$ may be integral.  In \Cref{lem:absorption}, the coordinate
matrix of the six anchor outer products has determinant $288$; hence the
correction coefficients are rational.  The parameter $\varepsilon$ may be
chosen rational in a nonempty interval.  The only square roots enter when
passing from positive weights $w_i$ to the Parseval vectors
$\sqrt{w_i}d_i$.  Thus the proof is symbolic and does not rely on a finite
numerical search or on floating-point positivity tests.
\end{remark}

\section{Small orders and the complete classification}

For completeness we determine the orders below the range of
\Cref{thm:main}.  The values of $q(\overline{P_4})$ and
$q(\overline{P_5})$ also follow from
\cite[Theorem~3.2]{LeveneOblakSmigoc2019}; we give short self-contained
proofs and include $MB(\overline{P_3})$ to state the full classification.
We use the standard convention that $MB(G)$ is defined only when $q(G)=2$.

\begin{proposition}\label{prop:small}
For paths of orders $3,4,5$,
\[
 q(\overline{P_3})=2,\qquad MB(\overline{P_3})=1;
\]
\[
 q(\overline{P_4})=4,\qquad q(\overline{P_5})=3.
\]
Consequently $MB(\overline{P_4})$ and $MB(\overline{P_5})$ are undefined.
\end{proposition}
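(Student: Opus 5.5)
We treat the three orders separately, each time combining a small explicit construction for the upper bound with a structural obstruction for the lower bound.

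\textbf{Order $3$.} Here $\overline{P_3}$ is $K_2\cup K_1$: the single edge is $\{1,3\}$ and vertex $2$ is isolated. We take $U=(1/\sqrt2,0,1/\sqrt2)^{\mathsf T}$. Then $P=UU^{\mathsf T}$ is a rank-one projection with $P_{13}=1/2\ne0$ and $P_{12}=P_{23}=0$, so $P\in\cS(\overline{P_3})$. By \Cref{prop:projection} this gives $q\le2$ and $MB\le1$. Since $\overline{P_3}$ has an edge, no scalar matrix lies in $\cS(\overline{P_3})$, so $q=2$ and $MB(\overline{P_3})=1$.

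\textbf{Order $4$.} The graph $\overline{P_4}$ is again a path, namely $2$--$4$--$1$--$3$. We invoke the standard fact that every $A\in\cS(P_k)$ has $k$ distinct eigenvalues. Its proof is short and we include it: for any $\lambda$, deleting the first row and last column of $A-\lambda I$ leaves a triangular matrix whose diagonal consists of the nonzero path entries. Hence $\operatorname{rank}(A-\lambda I)\ge k-1$, every eigenvalue is simple, and $q(\overline{P_4})=4$.

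\textbf{Order $5$, upper bound $q\le3$.} We exhibit an explicit matrix. One route is to reuse a faithful path chain. The vectors $d_1,\dots,d_5$ from \eqref{eq:seedvectors} are already a faithful chain in $\R^3$, and any positive weighting gives a PSD matrix in $\cS(\overline{P_5})$ of rank $3$. We would then need a second eigenvalue with multiplicity $2$. Concretely, we choose weights $w_i$ so that the frame operator $\sum w_i d_id_i^{\mathsf T}$ has only two distinct eigenvalues, for example $\operatorname{diag}(1,1,2)$ after rotation. The nonzero spectrum of $UU^{\mathsf T}$ is then $\{1,1,2\}$, and with the zero eigenvalue of multiplicity $2$ this gives $q=3$. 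Because five outer products span a $5$-dimensional subspace of $\Sn^3$, the set of reachable frame operators is a $5$-dimensional cone, and it will meet the set of matrices with a repeated eigenvalue. Positivity is the delicate part, so in practice we would search directly for a rational example and verify it exactly. If that route fails, we would fall back on the Levene--Oblak--\v{S}migoc construction.

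\textbf{Order $5$, lower bound $q\ne2$.} This is the main obstacle. We argue by contradiction: if $q=2$, \Cref{prop:projection} gives a projection of rank $r\in\{1,2\}$, because the smaller multiplicity is at most $\lfloor 5/2\rfloor=2$. We then use the orthogonality argument from the proof of \Cref{thm:main}. All rows are nonzero, since $\overline{P_5}$ has no isolated vertex. Rank $1$ fails because consecutive rows would have to be orthogonal in $\R$. In rank $2$, $u_1\perp u_2$ and $u_3\perp u_2$ force $u_1\parallel u_3$, so $u_1\perp u_4$. This contradicts the fact that $\{1,4\}$ is an edge of $\overline{P_5}$. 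Hence $q(\overline{P_5})=3$, and $MB$ is undefined for $n=4,5$.
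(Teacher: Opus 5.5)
Your arguments for $n=3$, for $n=4$, and for the lower bound $q(\overline{P_5})\ne2$ are correct and match the paper. For $n=4$, your triangular-minor bound $\operatorname{rank}(A-\lambda I)\ge k-1$, taken in the path order $2,4,1,3$, is equivalent to the paper's eigenvector recursion. The gap is the upper bound $q(\overline{P_5})\le3$, which you do not actually prove. You claim that the open cone $\{\sum_{i=1}^5 w_id_id_i^{\mathsf T}:w_i>0\}$ ``will meet'' the set of matrices with a repeated eigenvalue, but this does not follow from dimension counting. That set has codimension two in $\Sn^3$, and dimension alone says nothing about whether it meets a particular open cone in a $5$-dimensional subspace. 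Whether it does depends on the specific vectors, and this is exactly the positivity question you flag and leave open. Searching for an example, or citing Levene--Oblak--\v{S}migoc only as a contingency, is a plan rather than an argument. The paper settles this step with the explicit matrix \eqref{eq:p5matrix}, whose characteristic polynomial factors as $(t-\frac52)(t^2+\frac54t-\frac58)^2$.

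Your own route, though, closes immediately with data already in the paper. Take the weights \eqref{eq:seedweights} and drop the last term of \eqref{eq:seedidentity}. This gives $\sum_{i=1}^5 a_id_id_i^{\mathsf T}=I_3-\tfrac1{24}d_6d_6^{\mathsf T}$. Since $\lVert d_6\rVert^2=9$, this operator has eigenvalues $1,1,\tfrac58$. The leading $5\times5$ block of \eqref{eq:gramseed} vanishes off the diagonal exactly at consecutive pairs. Hence the Gram matrix $[\sqrt{a_ia_j}\,d_i^{\mathsf T}d_j]_{i,j=1}^5$ lies in $\cS(\overline{P_5})$, and its spectrum is $\{0^{(2)},\tfrac58,1^{(2)}\}$, because it shares its nonzero eigenvalues with that frame operator. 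With this inserted, your proof is complete. It is arguably more economical than the paper's, since it reuses the anchor of \Cref{prop:anchor} instead of a separately verified matrix.
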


\begin{proof}
With the natural path labeling, $\overline{P_3}=K_2\cup K_1$, and
\[
 \frac12
 \begin{pmatrix}
 1&0&1\\0&0&0\\1&0&1
 \end{pmatrix}
 \in\cS(\overline{P_3})
\]
is a rank-one orthogonal projection.  It gives
$q(\overline{P_3})\le2$, while
$q(\overline{P_3})\ne1$ because the graph has an edge and hence no scalar
matrix belongs to $\cS(\overline{P_3})$.  Hence
$q(\overline{P_3})=2$ and $MB(\overline{P_3})=1$.

Next, $\overline{P_4}\cong P_4$, using the vertex order $3,1,4,2$.
After this permutation, every matrix in $\cS(P_4)$ is an irreducible real
symmetric tridiagonal matrix.  Such a matrix has simple spectrum: for fixed
$\lambda$, the first coordinate of a $\lambda$-eigenvector recursively
determines all remaining coordinates through the nonzero subdiagonal
entries.  Thus every eigenspace has dimension at most one, and symmetry
implies algebraic simplicity.  Hence $q(\overline{P_4})=4$.

For $\overline{P_5}$, suppose instead that $q(\overline{P_5})=2$.
Affine normalization at the eigenvalue of smaller multiplicity would, by
\Cref{prop:projection}, produce a projection in
$\cS(\overline{P_5})$ of rank at most $\lfloor5/2\rfloor=2$.  The rank-one
and rank-two obstructions from the proof of \Cref{thm:main} rule this out.
Thus $q(\overline{P_5})\ge3$.  Conversely, the exact matrix
\[
 A=\frac14
 \begin{pmatrix}
 0&0&5&4&3\\
 0&0&0&2&4\\
 5&0&0&0&5\\
 4&2&0&0&0\\
 3&4&5&0&0
 \end{pmatrix}
 \in\cS(\overline{P_5})
\tag{5.1}\label{eq:p5matrix}
\]
has characteristic polynomial
\[
 \chi_A(t)=\left(t-\frac52\right)
            \left(t^2+\frac54t-\frac58\right)^2.
\]
Its three distinct eigenvalues are
$5/2$ and $(-5\pm\sqrt{65})/8$.  Therefore
$q(\overline{P_5})\le3$, completing the proof.
\end{proof}

Combining \Cref{thm:main,prop:small} gives the complete classification for
$n\ge3$:
\[
MB(\overline{P_n})=
\begin{cases}
1, & n=3,\\
\text{undefined}, & n=4,5,\\
3, & n\ge6.
\end{cases}
\tag{5.2}\label{eq:classification}
\]
For $n=1,2$, the complement of the path is edgeless and has $q=1$.

\section{Relation to previous constructions}

It is useful to separate three increasingly restrictive tasks.  A faithful
orthogonal representation of $\overline{P_n}$ in $\R^3$ produces a
rank-three positive semidefinite matrix in $\cS(\overline{P_n})$.  A
two-eigenvalue realization additionally requires the frame operator to be a
scalar multiple of $I_3$.  Requiring the strong spectral property is a
further condition that is not needed for $MB$.

The finite-avoidance component of \Cref{lem:extension} is close in spirit to
the vector construction used for cycle complements in
\cite[Lemma~5.2]{BarrettEtAl2026}.  Their subsequent argument completes a
rank-three representation by adding three vertices.  Here the vertex set is
fixed.  The explicit seed in \Cref{prop:anchor} places $I_3$ in the relative
interior of the positive cone generated by six rank-one matrices that span
$\Sn^3$; \Cref{lem:absorption} then converts every finite faithful path
extension into a strictly scalable frame without adding vertices.  This is
the feature that removes the divisibility restriction and yields the exact
multiplicity bipartition.

Lin, Oblak, and \v{S}migoc proved the graph-level dichotomy
\[
 \overline{P_n}\in\mathcal G^{\mathrm{SSP}}
 \quad\Longleftrightarrow\quad 3\nmid n
\]
\cite[Theorem~5.5]{LinOblakSmigoc2020}.  Membership in
$\mathcal G^{\mathrm{SSP}}$ means that every matrix in
$\cS(\overline{P_n})$ has the strong spectral property.  Consequently,
nonmembership for $3\mid n$ is not an obstruction to the existence of a
rank-three projection; it says only that some matrix with this graph has no
SSP.  This distinction is one reason a direct projection construction is
needed.

In the discussion following Porism~5.4, Barrett et al. state that
$\overline{P_n}$ with $n>4$ and $3\nmid n$ is an example possessing the
required rank-three positive-semidefinite two-eigenvalue realization having the SSP
\cite[p.~160]{BarrettEtAl2026}.  The printed range includes $n=5$, which is
exceptional because $q(\overline{P_5})=3$ by \Cref{prop:small}; we therefore
use their statement only in the range $n\ge6$.  Our proof is independent of
that statement and applies simultaneously to all congruence classes.  In
particular, it supplies the orders $3\mid n$, $n\ge9$, left open by
\cite[Question~5.1]{AdmEtAl2019} and not covered by the Barrett et al.
example, thereby closing that question.

We have not established that the projections constructed here have the
SSP, and no SSP claim is needed for \Cref{thm:main}.  Determining whether
the anchor can be chosen so that the resulting projections enjoy an
appropriate strong property is a natural separate question.

\section{Conclusion}

The complement of every path on at least six vertices admits a rank-three
Parseval-frame realization, while its local pattern of three consecutive
path nonedges excludes ranks one and two.  Therefore its minimal
achievable multiplicity bipartition is $[n-3,3]$.  Together with the explicit
small-order analysis, this completely determines $MB(\overline{P_n})$
whenever the parameter is defined.  More broadly, the proof suggests a
reusable strategy: locate a faithful orthogonal representation containing a
strictly scalable set of rank-one anchors spanning the ambient symmetric
matrix space, and then absorb the remaining vertices by a small positive
perturbation of the anchor weights.

\bibliographystyle{plainnat}
\bibliography{references}

\begin{thebibliography}{13}
\providecommand{\natexlab}[1]{#1}
\providecommand{\url}[1]{\texttt{#1}}
\expandafter\ifx\csname urlstyle\endcsname\relax
  \providecommand{\doi}[1]{doi: #1}\else
  \providecommand{\doi}{doi: \begingroup \urlstyle{rm}\Url}\fi

\bibitem[Abdollahi and Najafi(2018)]{AbdollahiNajafi2018}
Farshid Abdollahi and Hashem Najafi.
\newblock Frame graph.
\newblock \emph{Linear and Multilinear Algebra}, 66\penalty0 (6):\penalty0
  1229--1243, 2018.
\newblock \doi{10.1080/03081087.2017.1347135}.

\bibitem[Adm et~al.(2019)Adm, Fallat, Meagher, Nasserasr, Plosker, and
  Yang]{AdmEtAl2019}
Mohammad Adm, Shaun Fallat, Karen Meagher, Shahla Nasserasr, Sarah Plosker, and
  Boting Yang.
\newblock Achievable multiplicity partitions in the inverse eigenvalue problem
  of a graph.
\newblock \emph{Special Matrices}, 7:\penalty0 276--290, 2019.
\newblock \doi{10.1515/spma-2019-0022}.

\bibitem[Adm et~al.(2020)Adm, Fallat, Meagher, Nasserasr, Plosker, and
  Yang]{AdmEtAl2020Corrigendum}
Mohammad Adm, Shaun Fallat, Karen Meagher, Shahla Nasserasr, Sarah Plosker, and
  Boting Yang.
\newblock Corrigendum to ``{Achievable multiplicity partitions in the inverse
  eigenvalue problem of a graph}''.
\newblock \emph{Special Matrices}, 8\penalty0 (1):\penalty0 235--241, 2020.
\newblock \doi{10.1515/spma-2020-0117}.

\bibitem[Ahmadi et~al.(2013)Ahmadi, Alinaghipour, Cavers, Fallat, Meagher, and
  Nasserasr]{AhmadiEtAl2013}
Bahman Ahmadi, Fatemeh Alinaghipour, Michael~S. Cavers, Shaun Fallat, Karen
  Meagher, and Shahla Nasserasr.
\newblock Minimum number of distinct eigenvalues of graphs.
\newblock \emph{Electronic Journal of Linear Algebra}, 26:\penalty0 673--691,
  2013.
\newblock \doi{10.13001/1081-3810.1679}.

\bibitem[Ayyanar et~al.(2025)Ayyanar, Johnson, and
  Thilak]{AyyanarJohnsonThilak2025}
K.~Ayyanar, P.~Sam Johnson, and A.~Senthil Thilak.
\newblock Frame scaling by graphs.
\newblock \emph{Indian Journal of Pure and Applied Mathematics}, 56\penalty0
  (3):\penalty0 891--900, 2025.
\newblock \doi{10.1007/s13226-025-00809-2}.

\bibitem[Barrett et~al.(2026)Barrett, Fallat, Furst, Nasserasr, Rooney, and
  Tait]{BarrettEtAl2026}
Wayne Barrett, Shaun Fallat, Veronika Furst, Shahla Nasserasr, Brendan Rooney,
  and Michael Tait.
\newblock Graphs with bipartite complement that admit two distinct eigenvalues.
\newblock \emph{Electronic Journal of Linear Algebra}, 42:\penalty0 146--161,
  2026.
\newblock \doi{10.13001/ela.2026.9443}.

\bibitem[Cahill and Chen(2013)]{CahillChen2013}
Jameson Cahill and Xuemei Chen.
\newblock A note on scalable frames.
\newblock In \emph{Proceedings of the 10th International Conference on Sampling
  Theory and Applications}, pages 93--96, 2013.

\bibitem[Chen et~al.(2014)Chen, Grimm, McMichael, and Johnson]{ChenEtAl2014}
Zhao Chen, Matthew Grimm, Paul McMichael, and Charles~R. Johnson.
\newblock Undirected graphs of {H}ermitian matrices that admit only two
  distinct eigenvalues.
\newblock \emph{Linear Algebra and its Applications}, 458:\penalty0 403--428,
  2014.
\newblock \doi{10.1016/j.laa.2014.02.022}.

\bibitem[Furst and Grotts(2021)]{FurstGrotts2021}
Veronika Furst and Howard Grotts.
\newblock Tight frame graphs arising as line graphs.
\newblock \emph{The PUMP Journal of Undergraduate Research}, 4:\penalty0 1--19,
  2021.
\newblock \doi{10.46787/pump.v4i0.2415}.

\bibitem[Hogben et~al.(2022)Hogben, Lin, and Shader]{HogbenLinShader2022}
Leslie Hogben, Jephian C.-H. Lin, and Bryan~L. Shader.
\newblock \emph{Inverse Problems and Zero Forcing for Graphs}, volume 270 of
  \emph{Mathematical Surveys and Monographs}.
\newblock American Mathematical Society, Providence, RI, 2022.
\newblock \doi{10.1090/surv/270}.

\bibitem[Kutyniok et~al.(2013)Kutyniok, Okoudjou, Philipp, and
  Tuley]{KutyniokEtAl2013}
Gitta Kutyniok, Kasso~A. Okoudjou, Friedrich Philipp, and Elizabeth~K. Tuley.
\newblock Scalable frames.
\newblock \emph{Linear Algebra and its Applications}, 438\penalty0
  (5):\penalty0 2225--2238, 2013.
\newblock \doi{10.1016/j.laa.2012.10.046}.

\bibitem[Levene et~al.(2019)Levene, Oblak, and
  \v{S}migoc]{LeveneOblakSmigoc2019}
Rupert~H. Levene, Polona Oblak, and Helena \v{S}migoc.
\newblock A {N}ordhaus--{G}addum conjecture for the minimum number of distinct
  eigenvalues of a graph.
\newblock \emph{Linear Algebra and its Applications}, 564:\penalty0 236--263,
  2019.
\newblock \doi{10.1016/j.laa.2018.12.001}.

\bibitem[Lin et~al.(2020)Lin, Oblak, and \v{S}migoc]{LinOblakSmigoc2020}
Jephian C.-H. Lin, Polona Oblak, and Helena \v{S}migoc.
\newblock The strong spectral property for graphs.
\newblock \emph{Linear Algebra and its Applications}, 598:\penalty0 68--91,
  2020.
\newblock \doi{10.1016/j.laa.2020.03.031}.

\end{thebibliography}

\end{document}